\definecolor{mycyan}{cmyk}{.3,0,0,0}
\newtheorem{thm}{Theorem}
\newtheorem{defn}{Definition}
\newtheorem{rem}{Remark}
\begin{document}
%
\title{How Many Samples Required in Big Data Collection: A Differential Message Importance Measure}
%
%
%

\author{Shanyun Liu, ~Rui~She, ~Pingyi~Fan\\

\small
Tsinghua National Laboratory for Information Science and Technology(TNList),\\
Department of Electronic Engineering, Tsinghua University, Beijing, P.R. China\\
E-mail: liushany16@mails.tsinghua.edu.cn, sher15@mails.tsinghua.edu.cn, ~fpy@tsinghua.edu.cn}

\maketitle

\begin{abstract}
Information collection is a fundamental problem in big data, where the size of sampling sets plays a very important role. This work considers the information collection process by taking message importance into account. Similar to differential entropy, we define differential message importance measure (DMIM) as a measure of message importance for continuous random variable. It is proved that the change of DMIM can describe the gap between the distribution of a set of sample values and a theoretical distribution. In fact, the deviation of DMIM is equivalent to Kolmogorov-Smirnov statistic, but it offers a new way to characterize the distribution goodness-of-fit. Numerical results show some basic properties of DMIM and the accuracy of the proposed approximate values. Furthermore, it is also obtained that the empirical distribution approaches the real distribution with decreasing of the DMIM deviation, which contributes to the selection of suitable sampling points in actual system.
\end{abstract}

\begin{IEEEkeywords}
Differential Message importance measure, Big Data, Kolmogorov-Smirnov test, Goodness of fit.
\end{IEEEkeywords}

\IEEEpeerreviewmaketitle
\section{Introduction}\label{Sec:Introduction}
The actual system of big data needs to process lots of data within a limited time generally, so many researches are on sample data to improve their efficiency \cite{chen2014big}. Distribution goodness-of-fit plays a fundamental role in signal processing and information theory, which focuses on the error magnitude between the distribution of a set of sample values and the real distribution. This paper desires to solve this problem based on information theory.

Shannon entropy \cite{shannon2001mathematical} is possibly the most important quantity in information theory, which describes the fundamental laws of data compression and communication \cite{verdu1998fifty}. Due to its success, numerous entropies have been provided in order to extend information theory. Among them, the most successful expansion is R{\'e}nyi entropy \cite{renyi1961measures}. There are many applications based on R{\'e}nyi entropy, such as hypothesis testing \cite{van2014renyi}.

Actually, entropy is a quantity with respect to probability distribution, which satisfies the intuitive notion of what a measure of information should be \cite{Elements}. Therefore, in this paper, we propose differential message importance measure (DMIM) as a measure of information for continuous random variable to characterize the process of information collection. DMIM is expanded from discrete message importance measure (MIM) \cite{fan2016message} which is such an information quantity which agrees with the intuitive notion of information importance for small probability event. Recent studies show that MIM has many applications in big data, such as information divergence measures \cite{she2017amplifying} and compressed data storage \cite{liu2017non}.

Much of the research in the goodness of fit in the past several decades focused on the Kolmogorov-Smirnov test \cite{massey1951kolmogorov,lilliefors1967kolmogorov}. Based on it, \cite{Resnick1992Advantures} gave an
error estimation of empirical distribution. This result can describe the goodness of fit very well and guide us to choose the sampling numbers. However, it can not visually display the process of information collection because the previous results can not describe the message carried by each sample and the information changes with the increase of the sampling size. The problem of testing goodness-of-fit in a discrete setting was discussed in \cite{harremoes2012information}. Fortunately, DMIM is the proper measure to help us consider the problem of goodness of fit in the view of the information collection of continuous random variables. Moreover, Compared with Kolmogorov-Smirnov statistic, DMIM also shows the relationship between the variance of a random variable and the error estimation of empirical distribution.

The rest of this paper is organized as follows. Section \ref{sec:two} introduces the definition and basic properties of DMIM. Then, the DMIM of some basic continuous distribution is discussed in Section \ref{sec:three}, in which we give the asymptotic analysis of Gaussian distribution. In Section \ref{sec:four}, the goodness of fit with DMIM is discussed to analyze the process of information collection. The validity of proposed theoretical results is verified by the simulation results in Section \ref{sec:five}. Finally, we finish the paper with conclusions in Section \ref{sec:six}.

\section{The Definition and Properties of DMIM}\label{sec:two}
\subsection{Differential Message Important Measure}
In this part, a new measure of information for continuous random variable will be introduced, which is called DMIM. It is an extension of MIM.
 \begin{defn}\label{Non-parametric MIM}
The DMIM $l(X)$ of a continuous random variable $X$ with density $f(x)$ is defined as
\begin{equation}
l(X) = {\int_S {f(x){e^{ - f(x)}}dx} },
\end{equation}
where $S$ is the support set of the random variable.
\end{defn}
In fact, the DMIM of a continuous random variable $X$ with density $f(x)$ can be written as
\begin{flalign}\label{equ:Normal}
l(X)=&\int_{ - \infty }^{ + \infty } {f (x)} {e^{ - f (x)}}dx \\
=& \int_{ - \infty }^{ + \infty } {f (x)} \sum\limits_{ n= 0}^\infty  {{{{{\left( { - f (x)} \right)}^n}} \over {n!}}} dx\tag{\theequation a}\label{equ:Normal a}\\
 = &\int_{ - \infty }^{ + \infty } {\sum\limits_{n = 0}^\infty  {{{\left( { - 1} \right)}^n}{{{{\left( {f (x)} \right)}^{n + 1}}} \over {n!}}} dx}  \tag{\theequation b}\label{equ:Normal b} \\
  =& \int_{ - \infty }^{ + \infty } {f (x)dx + \sum\limits_{n = 1}^\infty  {\int_{ - \infty }^{ + \infty } {{{\left( { - 1} \right)}^n}{{{{\left( {f (x)} \right)}^{n + 1}}} \over {n!}}dx} } }\tag{\theequation c}\label{equ:Normal c}\\
   =& 1 + \sum\limits_{n = 1}^\infty  {{{{{\left( { - 1} \right)}^n}} \over {n!}}\int_{ - \infty }^{ + \infty } {{{\left( {f (x)} \right)}^{n + 1}}dx} }. \tag{\theequation d}\label{equ:Normal d}
\end{flalign}

\subsection{Properties of DMIM}
In this part, several basic properties of DMIM are discussed in details.
\subsubsection{Upper and Lower Bound}
For any continuous random variable $X$ with density $f(x)$, due to $0\leq f(x)\leq 1$, it is obtained that
\begin{flalign}\label{equ:Bound}
0\le \int_S {f(x){e^{ - f(x)}}dx}  \le \int_S {f(x)dx}  = 1.
\end{flalign}


\subsubsection{Translation}
Let $Y=X+c$, where $c$ is a constant. Then $f_Y(y)=f_X(y-c)$, and
\begin{equation}
l(X+c)=\int_{-\infty}^{+\infty} {{f_X}(x - c){e^{ - {f_X}(x - c)}}dx} =l(X).
\end{equation}
As a result, the constant drift does not change the DMIM.

\subsubsection{Relation of DMIM to R{\'{e}}nyi Entropy}
The differential R{\'{e}}nyi entropy of a continuous random variable $X$ with density $f(x)$ is given by \cite{van2014renyi}
\begin{equation}
 h_{\alpha}(X)=\frac{1}{1-\alpha} \ln \int {(f(x))}^{\alpha}dx,
\end{equation}
where $\alpha>0$ and $\alpha \ne 1$. As $\alpha$ tends to 1, the Rényi entropy tends to the Shannon entropy.

Therefore, we obtain
\begin{equation}\label{equ:H_renyi}
 \int {(f(x))}^{\alpha}dx=e^{(1-\alpha) h_{\alpha}(X)}.
\end{equation}
Hence, we find
\begin{equation}\label{equ:Renyi}
  l(X) = 1 + \sum\limits_{n = 1}^\infty  {{{{{\left( { - 1} \right)}^n}} \over {n!}}e^{-n h_{n+1}(X)} }.
\end{equation}
Obviously, the DMIM is an infinite series of R{\'{e}}nyi Entropy.

\subsubsection{Truncation Error}
In this part, the remainder term of $l(X)$ will be discussed.
\begin{thm}\label{thm:Truncation Error}
 If $\int {(f(x))}^{n+1}dx \leq \varepsilon $ for every $n\geq m$, then
 \begin{equation}\label{equ:Truncation Error}
\left| {l(X) - (1 + \sum\limits_{n = 1}^{m - 1} {{\frac{{(-1)}^n}{n!}}{e^{ - n{h_{n + 1}}(X)}}} )} \right| \le e\varepsilon.
 \end{equation}
\end{thm}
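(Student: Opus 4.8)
The plan is to read off the remainder directly from the R\'enyi-entropy series representation of the DMIM. By \eqref{equ:Renyi} we already know $l(X) = 1 + \sum_{n=1}^{\infty} \frac{(-1)^n}{n!} e^{-n h_{n+1}(X)}$, so subtracting the partial sum $1 + \sum_{n=1}^{m-1} \frac{(-1)^n}{n!} e^{-n h_{n+1}(X)}$ leaves exactly the tail
\[
l(X) - \Bigl(1 + \sum_{n=1}^{m-1} \frac{(-1)^n}{n!} e^{-n h_{n+1}(X)}\Bigr) = \sum_{n=m}^{\infty} \frac{(-1)^n}{n!} e^{-n h_{n+1}(X)} .
\]
First I would make this identity precise, being explicit that the series converges (see the absolute-convergence remark below) so that splitting it at index $m$ is legitimate.

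Next I would convert each R\'enyi-entropy term back into a moment of the density via the identity \eqref{equ:H_renyi}, i.e. $e^{-n h_{n+1}(X)} = \int (f(x))^{n+1}\,dx$. This rewrites the tail as $\sum_{n=m}^{\infty} \frac{(-1)^n}{n!} \int (f(x))^{n+1}\,dx$, which is the form in which the hypothesis ``$\int (f(x))^{n+1}\,dx \le \varepsilon$ for every $n \ge m$'' can be applied. Then I would take absolute values, use the triangle inequality for the series to get $\sum_{n=m}^{\infty} \frac{1}{n!} \int (f(x))^{n+1}\,dx$, replace each integral by $\varepsilon$ using the hypothesis, and finally bound $\sum_{n=m}^{\infty} \frac{1}{n!} \le \sum_{n=0}^{\infty} \frac{1}{n!} = e$, which yields the asserted bound $e\varepsilon$.

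The argument is essentially a one-line tail estimate, so there is no deep obstacle; the only point that deserves care is the justification of the term-by-term manipulations. Using $0 \le f(x) \le 1$ (so that $(f(x))^{n+1} \le f(x)$ and hence $\int (f(x))^{n+1}\,dx \le 1$ for all $n \ge 0$), the series $\sum_{n} \frac{(-1)^n}{n!} \int (f(x))^{n+1}\,dx$ is dominated by $\sum_n \frac{1}{n!}$, so it converges absolutely; this both legitimizes splitting the series at $m$ and validates the interchange of summation and integration already implicit in the passage \eqref{equ:Normal}--\eqref{equ:Renyi}. With that bookkeeping in place, the chain of inequalities above completes the proof.
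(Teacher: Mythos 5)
Your proposal is correct and follows essentially the same route as the paper: expand $l(X)$ via the R\'enyi-entropy series \eqref{equ:Renyi}, identify the difference with the tail $\sum_{n\ge m}\frac{(-1)^n}{n!}\int (f(x))^{n+1}dx$, apply the triangle inequality and the hypothesis, and bound $\sum_{n\ge m}\frac{1}{n!}$ by $e$. The only difference is your added remark on absolute convergence, which the paper omits but which does not change the argument.
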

\begin{proof}
Substituting (\ref{equ:H_renyi}) and (\ref{equ:Renyi}) in the left of (\ref{equ:Truncation Error}), we obtain
\begin{flalign}\label{equ:Truncation Error 1}
&\left| {l\left( X \right) - \left( {1 + \sum\limits_{n = 1}^{m - 1} {{{{{\left( { - 1} \right)}^n}} \over {n!}}\int_{ - \infty }^{ + \infty } {{{\left( {f(x)} \right)}^{n + 1}}dx} } } \right)} \right| \nonumber\\
=&   \left| {\sum\limits_{n = m}^\infty  {{{{{\left( { - 1} \right)}^n}} \over {n!}}\int_{ - \infty }^{ + \infty } {{{\left( {f(x)} \right)}^{n + 1}}dx} } } \right| \\
\leq &  \sum\limits_{n = m}^\infty  {\left| {{{1} \over {n!}}\int_{ - \infty }^{ + \infty } {{{\left( {f(x)} \right)}^{n + 1}}dx} } \right|}  \tag{\theequation a}\label{equ:Truncation Error 1 }\\
\leq &\left( {\sum\limits_{n = m}^\infty  {{1 \over {n!}}} } \right)\varepsilon  \tag{\theequation b}\label{equ:Truncation Error 1 c}\\
\leq&\left( {1 + \sum\limits_{n = 1}^{m - 1} {{1 \over {n!}}}  + \sum\limits_{i = m}^\infty  {{1 \over {n!}}} } \right)\varepsilon =e\varepsilon  \tag{\theequation c}\label{equ:lTruncation Error 1 e}
\end{flalign}
where (\ref{equ:Truncation Error 1 c}) follows from $\int {(f(x))}^{n+1}dx \leq \varepsilon $.
\end{proof}
That is to say, if the integral of the density to the $(n+1)$-th power is limited, the remainder term will be restricted.

 \begin{rem}\label{rem:jieduan}
 Let $m=2$ in (\ref{equ:Truncation Error 1 c}), we obtain
 \begin{equation}
 \left| {l(x) - (1 -e^{-h_2(x)} )} \right| \le (1+1+\sum\limits_{n= 2}^{\infty}  {{  \frac{1}{n!}} } -2)  \varepsilon =(e-2)\varepsilon.
 \end{equation}
 \end{rem}


\section{The DMIM of Typical Distributions} \label{sec:three}
\subsection{Uniform Distribution}
For a random variable whose density is $\frac{1}{b-a}$ from $a$ to $b$ and $0$ elsewhere. Then we obtain
\begin{equation}\label{equ:Uniform 1}
l(X) =  {\int_a^b {{1 \over {b - a}}{e^{ - {1 \over {b - a}}}}dx} } = {e^{ - {1 \over {b - a}}}}.
\end{equation}
It is also noted that
\begin{flalign}\label{equ:Uniform 2}
\mathop {\lim }\limits_{\left( {b - a} \right) \to 0} {e^{ - {1 \over {b - a}}}} =0,\quad \mathop {\lim }\limits_{\left( {b - a} \right) \to \infty } {e^{ - {1 \over {b - a}}}} = 1.
\end{flalign}

\subsection{Normal Distribution}
Let $X\sim \phi (x) = {1 \over {\sqrt {2\pi {\sigma ^2}} }}{e^{ - {{{{\left( {x - \mu } \right)}^2}} \over {2{\sigma ^2}}}}}$, we obtain
\begin{flalign}\label{equ:Normal 1}
&\int_{ - \infty }^{ + \infty } {{{\left( {\phi (x)} \right)}^{n + 1}}dx}  = \int_{ - \infty }^{ + \infty } {{{\left( {{1 \over {\sqrt {2\pi {\sigma ^2}} }}{e^{ - {{{{\left( {x - \mu } \right)}^2}} \over {2{\sigma ^2}}}}}} \right)}^{n + 1}}dx} \\
 &={\left( {{1 \over {\sqrt {2\pi {\sigma ^2}} }}} \right)^{n + 1}}\int_{ - \infty }^{ + \infty } {{e^{ - {{n + 1} \over {2{\sigma ^2}}}{{\left( {x - \mu } \right)}^2}}}dx}  \tag{\theequation a}\label{equ:Normal 1 a}\\
 & = {\left( {{1 \over {\sqrt {2\pi {\sigma ^2}} }}} \right)^{n + 1}}\sqrt {{{2\pi {\sigma ^2}} \over {n + 1}}}    = {1 \over {\sqrt {n + 1} }}{\left( {{1 \over {\sqrt {2\pi {\sigma ^2}} }}} \right)^n}.  \tag{\theequation b}\label{equ:Normal 1 c}
   \end{flalign}
Substituting (\ref{equ:Normal 1 c}) in (\ref{equ:Normal d}), we obtian
\begin{equation}\label{equ:normal RMIM}
l(X) = 1 + \sum\limits_{n = 1}^\infty  {{{{{\left( { - 1} \right)}^n}} \over {n!}}} {1 \over {\sqrt {n + 1} }}{\left( {{1 \over {\sqrt {2\pi {\sigma ^2}} }}} \right)^n}.
\end{equation}
Obviously, if $\sigma>1/\sqrt{2 \pi}$, $\int_{ - \infty }^{ + \infty } {{{\left( {\phi (x)} \right)}^{n + 1}}dx}$ will be less than or equal to $1/{(2\sqrt{3}\pi\sigma^2})$ for every $n\geq 2$ because $ {1 \over {\sqrt {n + 1} }}{\left( {{1 \over {\sqrt {2\pi {\sigma ^2}} }}} \right)^n}$ monotonically decreases in this case. According to Remark \ref{rem:jieduan}, we obtain
\begin{equation}\label{equ:normal jieduan 1}
  \left| {l(x) - (1 -e^{-h_2(x)} )} \right| \leq \frac{(e-2) }{{2\sqrt{3}\pi\sigma^2}}\approx \frac{0.066}{\sigma^2}.
\end{equation}
If $\sigma$ is big enough, $\frac{(e-2) }{{2\sqrt{3}\pi\sigma^2}} \approx 0$. In this case, substituting $h_2(X)=\ln2+0.5\ln\pi +\ln \sigma$ in (\ref{equ:normal jieduan 1}), we find
\begin{equation}\label{equ:normal jieduan 2}
  l(x) \approx 1-\frac{1}{2\sqrt{\pi}\sigma } \approx e^{-\frac{1}{2\sqrt{\pi}\sigma}}.
\end{equation}
In fact, $e^{-\frac{1}{2\sqrt{\pi}\sigma}}$ is a very good approximation for DMIM of normal distribution when $\sigma$ is not too small, which will be shown by the numerical results in section \ref{sec:five}.


\subsection{Negative Exponential Distribution}
Letting
\begin{equation}
\begin{split}
 X\sim f(x)=\left\{
   \begin{aligned}
 & \lambda e^{-\lambda x}, \quad x \geq 0 \\
 & 0,\quad\quad\quad x<0 \\
   \end{aligned}
   \right.,
   \end{split}
\end{equation}
we obtain
\begin{flalign}\label{equ:Negative Exponential 1}
&\int_{ - \infty }^{ + \infty } {{{\left( {f(x)} \right)}^{n + 1}}dx}  = \int_0^{ + \infty } {{{\left( {\lambda {e^{ - \lambda x}}} \right)}^{n + 1}}dx}    = {{{\lambda ^n}} \over {n + 1}}.
 \end{flalign}
Substituting (\ref{equ:Negative Exponential 1}) in (\ref{equ:Normal d}), we obtain
\begin{equation}\label{equ: Negative Exponential RMIM}
l(X) = 1 + \sum\limits_{n = 1}^\infty  {{{\left( { - 1} \right)}^n}{{{\lambda ^n}} \over {\left( {n + 1} \right)!}}} =\frac{1}{\lambda}(1-e^{-\lambda}).
\end{equation}
It is noted that
\begin{flalign}\label{equ:Negative Exponential 2}
 \mathop {\lim }\limits_{\lambda  \to 0} {1 \over \lambda }\left( {1 - {e^{ - \lambda }}} \right) = 1, \quad \mathop {\lim }\limits_{\lambda  \to \infty} {1 \over \lambda }\left( {1 - {e^{ - \lambda }}} \right) = 0 .
 \end{flalign}

\section{Goodness of Fit} \label{sec:four}
In this section, we will consider the problem of distribution goodness-of-fit in a continuous setting. Let $X_1,X_2,...X_n$ be a sequence of independent and identically distributed random variables, each having mean $\mu$ and variance $\sigma^2$. In practice, the real distribution is generally unknown and we usually use empirical distribution to substitute real distribution. Generally, the empirical distribution function is given by
\begin{equation}
 \hat F_n(x)=\frac{1}{n}\sum\limits_{k=1}^n {I_{(X_k\leq x)}},
\end{equation}
and the real distribution is $F(x)$ .

One practical problem that can occur with this strategy is that how many samples is required for fitting the real distribution. Many literatures studied this problem by Kolmogorov-Smirnov statistic \cite{massey1951kolmogorov,lilliefors1967kolmogorov,Resnick1992Advantures}. When $n$ is big enough, the confidence limits for a cumulative distribution are given by \cite{Resnick1992Advantures},
\begin{equation}
 P\{D_n>d\}\approx 2\sum\limits_{k=1}^{\infty} {{(-1)}^{k-1}e^{-2nk^2d^2}},
\end{equation}
where $D_n$ is error bound between empirical distribution and real distribution, called Kolmogorov-Smirnov statistic, which is defined as
\begin{equation}
D_n=\mathop {\sup}\limits_x {\left| \hat F_n (x)-F(x)\right|},
\end{equation}

Though this result can describe the goodness of fit very well and guide us to choose the sampling numbers, we need to give two artificial criterions, the deviation value $d$ and the probability $ P\{D_n>d\}$, in order to determine $n$. In addition, this method do not take the message importance of samples into account, which makes the process of information collection not intuitionistic.

In this paper, we consider this problem from the perspective of DMIM. Firstly, we define
\begin{equation}\label{equ:XNX define}
\gamma \left( n \right)  ={{l\left( {\sum\limits_{i = 1}^n {{X_i}} } \right)} / {l(X)}}  .
\end{equation}
as relative importance of these $n$ sample points. According to central-limit theorem \cite{ross2014first}, when $n$ is big enough, $\sum\nolimits_{i =1}^n {{X_i}} $ approximately obeys normal distribution $N(n\mu,n\sigma^2)$. In fact, when $\sqrt{n}\sigma$ is not too small (such a condition is satisfied because $n$ is big enough), $l\left( {\sum\nolimits_{i = 1}^n {{X_i}} } \right) \approx {e^{ - {1 \over {2\sqrt {\pi n} \sigma }}}}$ according to (\ref{equ:normal jieduan 2}). Hence
\begin{equation}\label{equ:XNX define 1}
\gamma(n)={{{e^{ - {1 \over {2\sqrt {\pi n} \sigma }}}}} \over {l\left( X \right)}}.
\end{equation}

We find $\gamma(n)$ increases rapidly firstly, and then increases slowly by analyzing its monotonicity. Moreover, we obtain
\begin{equation}\label{equ:XNX infty}
\gamma(\infty)=\mathop {\lim }\limits_{n \to \infty } \gamma \left( n \right) = \mathop {\lim }\limits_{n \to \infty } {{{e^{ - {1 \over {2\sqrt {\pi n} \sigma }}}}} \over {l\left( X \right)}} = {1 \over {l\left( X \right)}},
\end{equation}
which means $\gamma(n)$ reaches limit as $n \to \infty$. In fact, these two points are consistent with the characteristic of data fitting. Both $\gamma(n)$ and data fitting have the law of diminishing of marginal utility. Furthermore, the goodness of fit can not increase unboundedly and it reaches the upper bound when the number of sampling points approaches infinity. DMIM is bounded, while Shannon entropy and R{\'e}nyi entropy do not possess these characteristic. In conclusion, we adopt $\left| \gamma(\infty)-\gamma(n)\right|$ to describe the goodness of fit.


\begin{thm}\label{thm:low bound number}
  $X_1,X_2,X_3,\dots,X_n$ are the $n$ sampling of a continuous random variable $X$, whose density is $f(x)$. If $\left| \gamma(\infty)-\gamma(n)\right| \leq \varepsilon$, we will obtain
  \begin{equation}
   P\left\{D_n>\sqrt{2\pi\sigma^2 \ln{\frac{19}{9\beta}}} \ln{\frac{1}{1-\varepsilon}}\right\} < \beta.
  \end{equation}
\end{thm}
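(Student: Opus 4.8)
The plan is to translate the importance condition $|\gamma(\infty)-\gamma(n)|\le\varepsilon$ into a lower bound on the sample size $n$, and then substitute that bound into the Kolmogorov--Smirnov confidence limit $P\{D_n>d\}\approx 2\sum_{k\ge1}(-1)^{k-1}e^{-2nk^2d^2}$ quoted just above the statement. So the proof splits into one analytic step (extracting $n$) and one routine estimation step (bounding the tail).

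First I would use the closed forms already available: by (\ref{equ:XNX define 1}) and (\ref{equ:XNX infty}), and since $e^{-1/(2\sqrt{\pi n}\sigma)}$ increases in $n$ so that $\gamma(\infty)\ge\gamma(n)$, one has
\[
|\gamma(\infty)-\gamma(n)|=\frac{1-e^{-1/(2\sqrt{\pi n}\sigma)}}{l(X)}.
\]
Because $0\le l(X)\le 1$ by (\ref{equ:Bound}), the hypothesis gives $1-e^{-1/(2\sqrt{\pi n}\sigma)}\le\varepsilon\, l(X)\le\varepsilon$, i.e. $e^{-1/(2\sqrt{\pi n}\sigma)}\ge 1-\varepsilon$. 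Taking logarithms and rearranging yields $\frac{1}{2\sqrt{\pi n}\sigma}\le\ln\frac{1}{1-\varepsilon}$, hence
\[
n\ \ge\ \frac{1}{4\pi\sigma^2\left(\ln\frac{1}{1-\varepsilon}\right)^2}.
\]

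Second, with $d=\sqrt{2\pi\sigma^2\ln\frac{19}{9\beta}}\,\ln\frac{1}{1-\varepsilon}$, so that $d^2=2\pi\sigma^2\ln\frac{19}{9\beta}\left(\ln\frac{1}{1-\varepsilon}\right)^2$, the lower bound on $n$ gives $2nd^2\ge\ln\frac{19}{9\beta}$. The series $\sum_{k\ge1}(-1)^{k-1}e^{-2nk^2d^2}$ is alternating with strictly decreasing terms, so it is smaller than its first term, whence
\[
P\{D_n>d\}\approx 2\sum_{k=1}^{\infty}(-1)^{k-1}e^{-2nk^2d^2}< 2e^{-2nd^2}\le 2e^{-\ln\frac{19}{9\beta}}=\frac{18}{19}\beta<\beta,
\]
which is the claim; the particular constant $19/9$ is just what makes $2e^{-\ln(19/(9\beta))}$ land strictly below $\beta$. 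The one place that needs care is the first step: one must keep track of the sign of $\gamma(\infty)-\gamma(n)$ and apply $l(X)\le 1$ in the direction that eliminates the unknown quantity $l(X)$ from the final bound. It is also worth noting explicitly that the argument inherits the Gaussian approximation $l(\sum_i X_i)\approx e^{-1/(2\sqrt{\pi n}\sigma)}$ from the central limit theorem and from (\ref{equ:normal jieduan 2}), so the conclusion is asymptotic in $n$ in the same sense as the Kolmogorov--Smirnov confidence limit it is built on.
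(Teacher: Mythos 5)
Your proposal is correct, and its skeleton is the same as the paper's: translate $\left| \gamma(\infty)-\gamma(n)\right| \leq \varepsilon$ into $n \ge 1/(4\pi\sigma^2\ln^2(1-\varepsilon))$ by using $0\le l(X)\le 1$ to eliminate $l(X)$, pick the same $d$, and observe that this forces $2nd^2 \ge \ln\frac{19}{9\beta}$, i.e. $e^{-2nd^2}\le \frac{9\beta}{19}$. Where you genuinely diverge is the treatment of the Kolmogorov--Smirnov tail series. The paper pairs consecutive terms, discards the negative halves, majorizes the quadratic exponent $(2m-1)^2$ by a linear one to sum a geometric series, arriving at the bound $\frac{2e^{-2nd^2}}{1-e^{-8nd^2}}$, and then must verify the quartic inequality $\beta t^4 + 2t - \beta \le 0$ at $t = e^{-2nd^2}$, which is where the threshold $\beta \le \frac{19}{9}\,19^{-1/4}\approx 1.0112$ appears. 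You instead apply the Leibniz bound for an alternating series with strictly decreasing terms, $\sum_{k\ge 1}(-1)^{k-1}e^{-2nk^2d^2} < e^{-2nd^2}$, and conclude $P\{D_n>d\} < 2e^{-2nd^2} \le \frac{18}{19}\beta < \beta$ in one line. Your route is shorter and strictly tighter (your majorant $2e^{-2nd^2}$ is smaller than the paper's $\frac{2e^{-2nd^2}}{1-e^{-8nd^2}}$), and it exposes that the constant $\frac{19}{9}$ is only there to push $2\cdot\frac{9}{19}$ below $1$. You also correctly flag the shared caveats that the paper leaves implicit at this point: the formula for $\gamma(n)$ rests on the central-limit/Gaussian approximation of $l(\sum_i X_i)$, and the tail expression for $D_n$ is itself only an asymptotic approximation, so the conclusion holds in that same asymptotic sense.
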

\begin{proof}
In fact, a upper bound of $ P\{D_n>d\}$ is given by
\begin{flalign}\label{equ: up_bound sum}
  & P\left\{ {{D_n} > d} \right\} \approx 2\sum\limits_{k = 1}^\infty  {{{\left( { - 1} \right)}^{k - 1}}{e^{ - 2n{k^2}{d^2}}}}   \\
  &  = 2\sum\limits_{m = 1}^\infty  {\left( {{e^{ - 2n{{\left( {2m - 1} \right)}^2}{d^2}}} - {e^{ - 2n{{(2m - 1 + 1)}^2}{d^2}}}} \right)}   \tag{\theequation a}\label{equ: up_bound sum a}\\
  &  < 2\sum\limits_{m = 1}^\infty  {{e^{ - 2n{{\left( {2m - 1} \right)}^2}{d^2}}}}   \tag{\theequation b}\label{equ: up_bound sum c}\\
  &  \le 2\sum\limits_{m = 1}^\infty  {{e^{ - 4n{d^2}\left( {2m - 1} \right) + 2n{d^2}}}}   \tag{\theequation c}\label{equ: up_bound sum d}\\
  &  = 2\sum\limits_{m = 1}^\infty  {{e^{ - 8n{d^2}m + 6n{d^2}}}}    = {{2{e^{ - 2n{d^2}}}} \over {1 - {e^{ - 8n{d^2}}}}}  .\tag{\theequation d}\label{equ: up_bound sum f}
\end{flalign}
(\ref{equ: up_bound sum c}) is obtained for the fact that ${e^{ - 2n{{(2m - 1 + 1)}^2}{d^2}}} > 0$. (\ref{equ: up_bound sum d}) requires $ - 2n{d^2}{\left( {2m - 1} \right)^2} \le  - 4n{d^2}\left( {2m - 1} \right) + 2n{d^2}$. Such a condition is satisfied because $ - 2n{d^2}{\left( {2m - 1 - 1} \right)^2} \le 0$.

This means, we only need to check ${{{e^{ - 2n{d^2}}}} \over {1 - {e^{ - 8n{d^2}}}}} \le \beta $ holds. 

Substituting (\ref{equ:XNX define 1}) and (\ref{equ:XNX infty}) in $\left| \gamma(\infty)-\gamma(n)\right| \leq \varepsilon$, we get
\begin{equation}\label{equ:NMIM_number_1}
   \left| {{1 \over {l\left( X \right)}} - {{{e^{ - {1 \over {2\sqrt {\pi n} \sigma }}}}} \over {l\left( X \right)}}} \right| \le \varepsilon \Rightarrow n \ge {1 \over {4\pi {\sigma ^2}{{\ln }^2}\left( {1 - \varepsilon l\left( X \right)} \right)}}.
\end{equation}

Because $0\leq l(X) \leq 1$, we obtain
\begin{equation}\label{equ:number_choose}
n \ge {1 \over {4\pi {\sigma ^2}{{\ln }^2}\left( {1 - \varepsilon l\left( X \right)} \right)}} \ge {1 \over {4\pi {\sigma ^2}{{\ln }^2}\left( {1 - \varepsilon } \right)}}.
\end{equation}
Letting
\begin{equation}\label{equ:NMIM_proof d}
d=\sqrt{2\pi\sigma^2 \ln{\frac{19}{9\beta}}} \ln{\frac{1}{1-\varepsilon}},
\end{equation}
we have
 \begin{flalign}\label{equ:NMIM_proof 1}
   2n{d^2} &\ge 2{{2\pi {\sigma ^2}\ln {{19} \over {9\beta }}{{\ln }^2}(1 - \varepsilon )} \over {4\pi {\sigma ^2}{{\ln }^2}(1 - \varepsilon )}}   \Rightarrow  {e^{ - 2n{d^2}}} \le {{9\beta } \over {19}}.
\end{flalign}
It is easy to check
\begin{equation}\label{equ:NMIM_proof 2}
\beta {\left( {{e^{ - 2n{d^2}}}} \right)^4} +2 {e^{ - 2n{d^2}}} - \beta  \le 0,
\end{equation}
when $\beta \le {{19} \over 9}\root 4 \of {{1 \over {19}}} \approx1.0112$. In fact, $\beta$ is a probability value, so we usually take $\beta \le 1$. Therefore, (\ref{equ:NMIM_proof 2}) holds all the time.

Hence,
\begin{equation}
{2{{e^{ - 2n{d^2}}}} \over {1 - {e^{ - 8n{d^2}}}}} \le \beta .
\end{equation}
Based on the discussions above, we get
  \begin{equation}
   P\left\{D_n>\sqrt{2\pi\sigma^2 \ln{\frac{19}{9\beta}}} \ln{\frac{1}{1-\varepsilon}}\right\}< \beta.
  \end{equation}
\end{proof}

\begin{rem}
Due to (\ref{equ:NMIM_proof d}), we obtain
 \begin{flalign}\label{equ:NMIM_proof relation}
   \varepsilon  &= 1 - {e^{-d{{\left( {2\pi {\sigma ^2}\ln {{19} \over {9\beta }}} \right)}^{ - 1/2}}}},   \\
   \beta  &= {{19} \over 9}{e^{ - {{{d^2}} \over {2\pi {\sigma ^2}{{\ln }^2}(1 - \varepsilon )}}}}   \tag{\theequation a}\label{equ:NMIM_proof relation a}.
\end{flalign}
Therefore, there is a ternary relation among $d$, $\beta$ and $\varepsilon$. If two of them are known, the third one can be obtained.
\end{rem}

\begin{rem}
For arbitrary positive number $d$ and $\beta \le1$, one can always find a $\varepsilon_0$, which can be obtained by (\ref{equ:NMIM_proof relation}), when $\varepsilon \le \varepsilon_0$, $P\left\{ {{D_n} > d} \right\} < \beta $ holds.
\end{rem}

\begin{rem}
When $\varepsilon$ tends zero, which means $n\to \infty$, at this time, $P\left\{D_n>0\right\}=0$. Therefore, the real distribution is equal to empirical distribution with probability $1$ as $\varepsilon \to 0$. That is,
\begin{equation}
\hat F_n(x) \to F(x) \quad as \quad \varepsilon \to 0.
\end{equation}
\end{rem}

Actually, the DMIM deviation characterizes the process of collection information. With the growth of sampling number, the information gathers, and the empirical distribution approaches real distribution at the same time. In particular, when $n \to \infty$, all the information about the real distribution will be obtained. In this case, the empirical distribution is equal to real distribution, naturely.

\begin{rem}
For arbitrary continuous random variable with variance $\sigma^2$, if the max allowed DMIM deviation is $\varepsilon$, the sampling number must be bigger than $1/(4 \pi \sigma^2 \ln^2(1-\varepsilon) $ according to (\ref{equ:number_choose}). 
\end{rem}
The sampling number only depends on one artificial criterion, the DMIM deviation, and the variance is the own attribute of the distribution. Furthermore, the sampling number has nothing to do with the distribution form.

\section{Numerical Results} \label{sec:five}
In this section, we present some numerical results to validate the above results in this paper.

\subsection{The properties of DMIM}
 Fig. \ref{fig:normal_big_sigma_err2} shows relative error for approximation $e^{-1/(2\sqrt{\pi}\sigma)}$ and $1-1/(2\sqrt{\pi}\sigma)$ when $\sigma$ increases from $0.1$ to $10$. When $\sigma$ is not too small ( $\sigma>1$ for $e^{-1/(2\sqrt{\pi}\sigma)}$ and $\sigma>2$ for $1-1/(2\sqrt{\pi}\sigma)$), the relative error of both approximations is smaller than $1\%$. The relative error decreases with increasing of $\sigma$ for these two approximate values. When $\sigma<6.5$, the error of $e^{-1/(2\sqrt{\pi}\sigma)}$ is smaller than that of $1-1/(2\sqrt{\pi}\sigma)$ and the opposite is true when $\sigma>6.5$. In summary, $e^{-1/(2\sqrt{\pi}\sigma)}$ is a good approximation when $\sigma$ is not too small and $1-1/(2\sqrt{\pi}\sigma)$ is an excellent approximation when $\sigma$ is big enough.

\begin{figure}
  \centerline{\includegraphics[width=8.0cm]{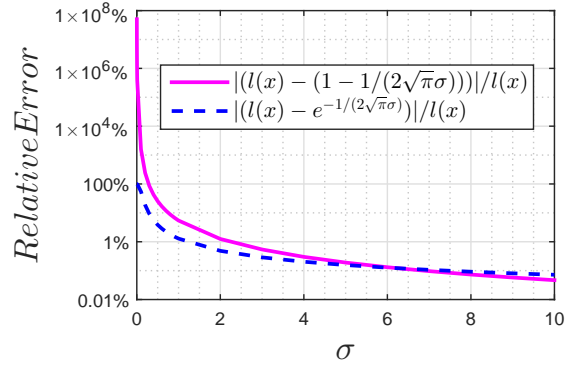}}
  \caption{Relative error vs. $\sigma$.}\label{fig:normal_big_sigma_err2}
\end{figure}

Fig. \ref{fig:smae_var} shows the DMIM of uniform distribution, Gauss distribution and negative exponential distribution when the variance increases from $0.1$ to $100$. It can be observed that the DMIM is subject to the variance. The DMIM increases with the increasing of variance for these three distributions. Among them, the DMIM in Gauss distribution is the largest and that in negative exponential distribution is the smallest.

\begin{figure}
  \centerline{\includegraphics[width=8.0cm]{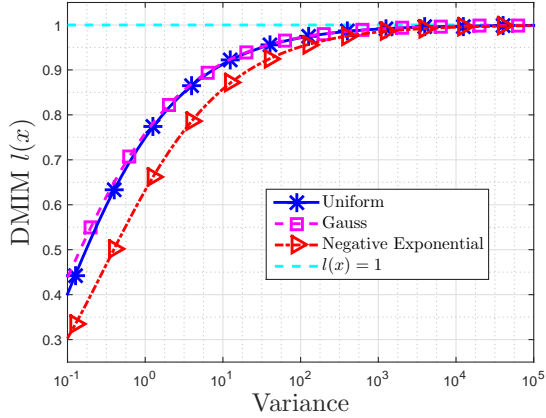}}
  \caption{DMIM $l(X)$ vs. Variance.}\label{fig:smae_var}
\end{figure}

\subsection{Goodness of fit by DMIM}
Next we focus on conducting Monte Carlo simulation by computer to validate our results about goodness of fit. The samples are independent identically distributed, each having variance $\sigma^2$. $\sigma$ is $1$ or $2$. $\lambda=1/\sigma$ in negative exponential distribution, and the density of uniform distribution is $1/(2\sqrt{3}\sigma)$. The mean of normal distribution and uniform distribution is zero. The DMIM deviation $\varepsilon$ is varying from $0.001$ to $0.1$. For each value of $\varepsilon$, the simulation is repeated $10000$ times.

Fig. \ref{fig:Number_normal} shows the relationship between the probability of error bound $P\{D>d\}$ and DMIM deviation $\varepsilon$. Some observations can be obtained. The result that the goodness of fit is controlled by the DMIM deviation is true. That is, the probability of error bound decreases with the decreasing of DMIM deviation. In fact, this process can be divided into three phases. In phase one, in which $\varepsilon$ is very small (e.g. $\varepsilon<10^{-2.8}$ when $d=0.01$ and $\sigma=1$), $P\{D>d\}$ is close to zero. In phase two, $\varepsilon$ is neither too small nor too large (e.g. $10^{-2.8}<\varepsilon<10^{-2}$ when $d=0.01$ and $\sigma=1$). In this case, $P\{D>d\}$ increases rapidly from zero to one. In the third phase, in which $\varepsilon$ is large (e.g. $\varepsilon>10^{-2}$ when $d=0.01$ and $\sigma=1$), $P\{D>d\}$ approaches one. When $d=0.01$ and $\sigma=1$, $P\{D>d\}$ in these three distributions is very close to each other, whose upper bound is indeed $\beta$ (obtained by (\ref{equ:NMIM_proof relation a})). For the same distribution, if the standard deviation is the same, $P\{D>d\}$ will decrease with increasing of $d$ when $P\{D>d\}<1$. Furthermore, for the same $d$, the probability of error bound increases with increasing of the given standard deviation.

\begin{figure}
  \centerline{\includegraphics[width=8.0cm]{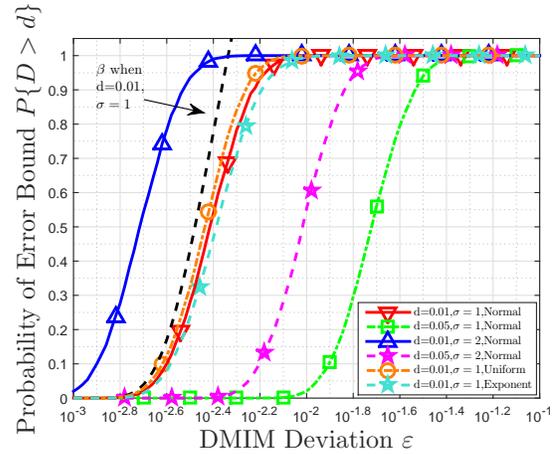}}
  \caption{Probability of error bound $P\{D>d\}$ vs. DMIM deviation $\varepsilon$.}\label{fig:Number_normal}
\end{figure}

\section{Conclusion} \label{sec:six}
In this paper, we discussed the distribution goodness-of-fit in the view of information collection, where the message importance is taken into account. Similar to differential entropy, DMIM was proposed as an measure of message importance for continuous random variable to help us describe the information flows during sampling. Then, uniform, normal and negative exponential distribution were discussed as typical examples, and high-precision approximate values for DMIM of normal distribution were given. Finally, we proved that the divergence between the empirical distribution and the real distribution is controlled by the DMIM deviation. Compared with Kolmogorov-Smirnov test, the new method based on DMIM gives us another viewpoint of information collection because it visually shows the information flow with the increasing of sampling points, which helps us to design sampling strategy for the actual system of big data.


\end{document}